\newtheorem{theorem}{Theorem}
\newcommand{\norm}[1]{\left\lVert#1\right\rVert} 
\newtheorem{assumption}{Assumption}
\begin{document}
\title{Decentralized Consensus Optimization Based on Parallel Random Walk}

\author{
%	Yu~Ye,~\IEEEmembership{Student~Member,~IEEE,}
%	Hao~Chen,~\IEEEmembership{Student~Member,~IEEE,}
%	and~Ming~Xiao,~\IEEEmembership{Senior~Member,~IEEE,}
%	and~Mikael~Skoglund,~\IEEEmembership{Fellow,~IEEE }
	
		Yu~Ye,~\IEEEmembership{Student~Member,~IEEE,} 
		Hao~Chen,
		Zheng~Ma,~\IEEEmembership{Member,~IEEE,}
		and~Ming~Xiao,~\IEEEmembership{Senior~Member,~IEEE,} 
%%%%%%%%%%%%%%%%%%%%%%%%%%%%%%%%%%%%%%%%%%%%%%%%%%%%%%%%%%%%%%%%%%%%%%%%%%%%%%%%%%%%%%
%		Author~1,~Author~2,~Author~3,~Author~4.
%%		and~Mikael~Skoglund,
%	\thanks{Author~1,~Author~2,~Author~3,~Author~4 (email:).}
%%%%%%%%%%%%%%%%%%%%%%%%%%%%%%%%%%%%%%%%%%%%%%%%%%%%%%%%%%%%%%%%%%%%%%%%%%%%%%%%%%%%%%	
	\thanks{ Yu Ye, Hao Chen, Zheng Ma and Ming Xiao are with School of Electrical Engineering and Computer Science, KTH, Stockholm, Sweden (email: yu9@kth.se, haoch@kth.se, zma@kth.se, mingx@kth.se).}	
		
%Zheng Ma is with Southwest Jiaotong University (email: zma@swjtu.cn)}
	% <-this % stops  
}
  
\maketitle
 
\begin{abstract}
 
The alternating direction method of multipliers (ADMM) has recently been recognized as a promising approach for large-scale machine learning models. However, very few results study ADMM from the aspect of communication costs, especially jointly with running time. In this letter, we investigate the communication efficiency and running time of ADMM in solving the consensus optimization problem over decentralized networks. We first review the effort of random walk ADMM (W-ADMM), which reduces communication costs at the expense of running time. To accelerate the convergence speed of W-ADMM, we propose the parallel random walk ADMM (PW-ADMM) algorithm, where multiple random walks are active at the same time. Moreover, to further reduce the running time of PW-ADMM, the intelligent parallel random walk ADMM (IPW-ADMM) algorithm is proposed through integrating the \textit{Random Walk with Choice} with PW-ADMM. By numerical results from simulations, we demonstrate that the proposed algorithms can be both communication efficient and fast in running speed compared with state-of-the-art methods.

% propose a  communication-efficient parallel random walk ADMM-based algorithm (PW-ADMM) to solve the consensus optimization problem over a decentralized network. Compared with existing methods, such as EXTRA, D-GD, D-ADMM, COCA, and Walkman, the proposed PW-ADMM method  does not only exchange less message but not relies on synchronization protocol restriction,  thus consuming less communication resource and running fast. To further reduce the running time of PW-ADMM algorithm, scheme with intelligent selection of agents is also described. 

\end{abstract}

\begin{IEEEkeywords}
  Decentralized network; consensus optimization; alternating direction method of multipliers (ADMM).  
\end{IEEEkeywords}
\IEEEpeerreviewmaketitle

\section{Introduction}

\IEEEPARstart{C}{onsider} a network $\mathcal{G}=(\mathcal{V},\mathcal{E})$, where $\mathcal{V}=\{1,...,N\}$ is the set of agents and $\mathcal{E}$ is the set of links. The agents aim to solve the following consensus optimization problem,
\begin{equation}\label{eq1}
\min_{x}~ \sum_{i=1}^{N}f_i(x),
\end{equation}
where $f_i:\mathbbm{R}^n\to\mathbbm{R}$ is the local loss function held by agent $i$, and all the agents share a common optimization variable $x\in\mathbbm{R}^n$. This consensus problem is applied in various areas including wireless sensor networks \cite{ID,FZeng} and smart grid implementations \cite{HJ}. Specifically in \cite{5704850}, the distributed beamforming scheme for multiple relay nodes (RNs) is designed by solving a consensus problem. In general \cite{6365874}, consensus-based distributed linear estimation for cooperative communication in wireless networks can be formulated as (\ref{eq1}).

A few decentralized algorithms have been provided to solve the consensus problem in (\ref{eq1}). For low computation complexity, the first-order algorithms such as decentralized gradient decent (DGD) and EXTRA are proposed by \cite{DGD} and \cite{EXTRA}, respectively, where agents use their local gradient during the optimization process. Among existing decentralized algorithms, the alternating direction method of multipliers (ADMM), which is extensively applied in wireless communications \cite{8466658}, is shown to be faster than DGD in convergence \cite{DADMM}, in which at every iteration an agent needs to solve an optimization problem with collected information from neighboring agents. Besides, a variety of algorithms such as Gauss-Seidel ADMM and Jacobi-Proximal ADMM \cite{Deng2017} based on the original work in \cite{boyd2011distributed} are provided to solve the consensus problem (\ref{eq1}).

In practice, one ideal decentralized approach is expected to obtain the optimal solution of (\ref{eq1}) with the minimal communication and computation costs. Lots of research efforts have been put on computation complexity reduction. But there are very few results \cite{JFC,COCA,WADMM} on reducing the communication cost of ADMM. Though both proposed algorithms, distributed ADMM (D-ADMM) in \cite{JFC} and communication-censored ADMM (COCA) in \cite{COCA}, limit the overall communication at each iteration, the COCA can adaptively determine whether a message is informative, and D-ADMM relies more on the network topology.    
%machine learning consists of large datasets and complex models. It becomes inevitable to train machine learning models  through distributed multiple machines (i.e., agents ) with the constrained of consensus problem. Quickly,  it arises the trend that more and more researchers focus on the study of distributed learning system. Due to the challenge of data privacy and security issues, in this letter our work specifies on a decentralized network architecture. In the last several decades, gradient descent-based (GD) algorithms have attracted extensive attention for solving such minimization problems. Some classical approaches are formulated, such as decentralized-GD (DGD), and EXTRA. However, because of Boyd's much work among optimization problem, another one promising technique to solve the decentralized problem is also becoming popular, i.e., the alternating direction method of multipliers (ADMM), in which at each iteration each agent updates its local variable through collecting local information from its neighbors. To date, a variety of ADMM-based algorithms have been proposed to the decentralized consensus optimization problem, including distributed-ADMM (D-ADMM), Jacobi-ADMM, COCA, COLA, to name a few.
The random walk ADMM (W-ADMM) algorithm is proposed in \cite{WADMM}, which randomly activates a succession of nodes and incrementally updates the optimization variable. W-ADMM can achieve much less communication cost but at the expense of running time, since at each iteration only one agent is active for optimization. However, all the approaches provided in \cite{JFC,COCA,WADMM} are synchronous ADMM, which may suffer from the straggler problem \cite{zhang2014}. 

In what follows, we will propose the parallel random walk ADMM (PW-ADMM) algorithm that allows multiple random walks active in parallel. Furthermore, we integrate the intelligent agents selection scheme with PW-ADMM, which is presented in the algorithm of intelligent parallel random walk ADMM (IPW-ADMM). By numerical results, we show that the proposed approaches can be both communication efficient and fast in running time.

% \cite{boyd2011distributed,COCA,DADMM,DGD,EXTRA,WADMM,avin2006}

The remaining of this letter is organized as follows. We first introduce the parallel random walk algorithms in Section II. To demonstrate the effectiveness of the proposed approaches, we provide numerical results in Section III. Finally, we conclude the letter in Section IV. 
 
 \section{Parallel Random Walk Algorithms}  

\begin{figure*} [t] 
	\vskip 0.2in
	\begin{center}
		\centerline{\includegraphics[width=145mm]{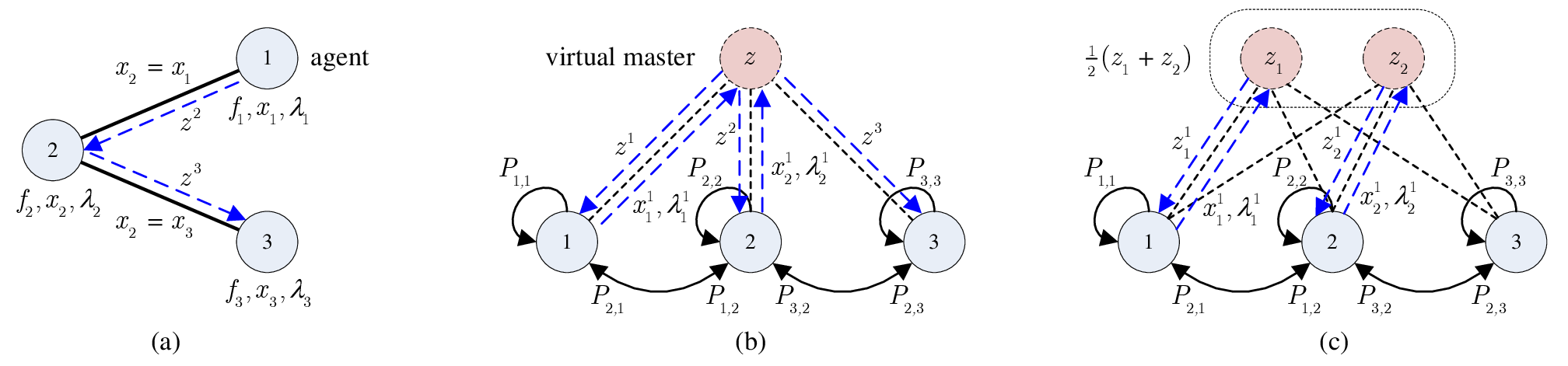}}
		\caption{(a) an example of W-ADMM; (b) the equivalent architecture of W-ADMM; (c) the equivalent architecture of parallel random walk ADMM algorithms.}
		\label{1}
	\end{center}
	\vskip -0.2in
\end{figure*} 

%Consider a undirected graph $\mathcal{G}=(\mathcal{V},\mathcal{E})$, where $\mathcal{V}=\{1,...,N\}$ is the set of agents and $\mathcal{E}$ is the set of links.  The agents aim to solve the following optimization problem:
%\begin{equation}\label{eq1}
%\min_{x}~ \sum_{i=1}^{N}f_i(x).
%\end{equation}

By defining $ \bm{x}=[x_1,...,x_N]\in\mathbbm{R}^{nN} $, problem (\ref{eq1}) can be rewritten as
\begin{equation}\label{eq2}
\begin{aligned}
\min_{\bm{x}, z}~\sum_{i=1}^{N}f_i(x_i),  ~~~
s.t.~  \mathbbm{1}\otimes z-\bm{x}=0,
\end{aligned}
\end{equation}
where $z\in\mathbbm{R}^n$, $\mathbbm{1}=[1,...,1]^T\in\mathbbm{R}^{n}$ and $\otimes$ is Kronecker product. The augmented Lagrangian for problem (\ref{eq2}) is 
\begin{equation}
\mathcal{L}_{\rho}(\bm{x},z,\bm{\lambda})=\sum _{i=1}^{N}f_i(x_i) + \left\langle \bm{\lambda} ,\mathbbm{1}\otimes z-\bm{x}   \right\rangle + \frac{\rho}{2}\norm{\mathbbm{1}\otimes z-\bm{x}}^2,
\end{equation}
where $\bm{\lambda}=[\lambda_1,...,\lambda_N]\in\mathbbm{R}^{nN}$ is the dual variable, and $\rho>0$ is a constant parameter. The iterated updates of $\bm{x}$, $\bm{\lambda}$ and $z$ can be found in Algorithm 1 (W-ADMM) in \cite{WADMM}. The Fig. 1 (a) presents an example of W-ADMM. Ignoring the difference in communication cost, the equivalent implementation of W-ADMM is shown in Fig. 1 (b), where agent $i$ updates local variables $x_i$ and $\lambda_i$ after receiving token $z$, while the virtual master updates $z$ with up-to-date $x_i$ and $\lambda_i$.  

 \subsection{Parallel Random Walk ADMM}
% Recalling the implementation of W-ADMM shown in Fig. 1 (b), 
To introduce the parallel random walk ADMM, we extend the architecture in Fig. 1 (b) to multiple virtual masters $\mathcal{M}=\{1,...,M\}$ as Fig. 1 (c). Denoting $ \bm{z}=[z_1,...,z_M]\in\mathbbm{R}^{nM} $, the decentralized problem is given by
 \begin{equation}\label{eq3}
 \begin{aligned}
 \min_{\bm{x}, \bm{z}}~  \sum_{i=1}^{N}f_i(x_i), ~~ 
 s.t.~  \mathbbm{1}\otimes \frac{1}{M}\sum_{l=1}^{M}z_l-\bm{x}=0,
 \end{aligned}
 \end{equation} 
 where $z_l\in\mathbbm{R}^{n} ~(l\in\mathcal{M})$ is the token held by the $l$-th random walk. In the constraint of problem (4), we let $x_i$ equal to the average of the summation of tokens. By doing this, we will be able to update tokens of multiple random walks in parallel. The augmented Lagrangian for problem (\ref{eq3}) is 

 \begin{align}
 \mathcal{L}_{\rho}&(\bm{x},\bm{z},\bm{\lambda})= \sum_{i=1}^{N}f_i(x_i)\notag\\& +  \langle\bm{\lambda}, \mathbbm{1}\otimes \frac{1}{M}\sum_{l=1}^{M}z_l-\bm{x}  \rangle + \frac{\rho}{2}  \| \mathbbm{1}\otimes \frac{1}{M}\sum_{l=1}^{M}z_l-\bm{x}   \|^2.  \label{eq31}
 \end{align} 
% where $\bm{\lambda}=[\lambda_1,...,\lambda_N]$.
 Following the traditional synchronous ADMM \cite{boyd2011distributed}, the update for $k+1$-th iteration follows
 % \begin{equation} 
 \begin{subequations}
 \begin{align}
 &\frac{1}{M}\sum_{l=1}^Mz_l^{k+1} := \frac{1}{N}\sum_{i=1}^{N} (x_i^k -\frac{\lambda_i^k}{\rho}  ) ;\label{eq9} \\
 &x_i^{k+1} := \arg\min_{x_i} ~f_i(x_i) +  \langle \lambda_i^k, \frac{1}{M}\sum_{l=1}^{M}z_l^{k+1}  -x_i \rangle \notag\\&~~~~~~~~~~~~~~~~~~~ +\frac{\rho}{2} \| \frac{1}{M}\sum_{l=1}^{M}z_l^{k+1}  -x_i  \|^2,~i\in\mathcal{V}; \\
 &\lambda_i^{k+1}:=\lambda_i^k + \rho ( \frac{1}{M}\sum_{l=1}^{M}z_l^{k+1}-x_i^{k+1} ),~i\in\mathcal{V}.\label{eq11}
 \end{align}	
  \end{subequations}
 % \end{equation}   
Inspired by the incremental update of WADMM \cite{WADMM}, we transform (\ref{eq9})-(\ref{eq11}) to the following process by approximating $\frac{1}{M}\sum_{l=1}^{M}z_l$ with $z_m$ for the update of $x_i$ and $\lambda_i$.
\begin{subequations}
\begin{align} 
&x_i^{k+1}:=\left\{\begin{aligned}
&\arg \min_{x_i} f_i(x_i) +  \langle \lambda_i^k,z_m^{k+1}-x_i  \rangle + \frac{\rho}{2}  \| z_m^{k +1} -x_i \| ^2\\ & ~~~~~~~~~+  \frac{1}{2}\left\|x_i-x_i^{k }\right\|_{q_i}^2, ~i=i_m,~m\in\widetilde{\mathcal{M}}_{k+1} ;\\
&x_i^{k },~\text{o.w.};
\end{aligned}  \right. \label{eq13}\\
&\lambda_i^{k+1}:=\left\{\begin{aligned}
& \lambda_i^{k } + \rho\left(z_m^{k +1}-x_{i}^{k +1}\right),~i=i_m,~m\in\widetilde{\mathcal{M}}_{k+1} ;\\
& \lambda_i^{k },~\text{o.w.} ; 
\end{aligned}  \right. \label{eq14}\\
&\frac{1}{M}\sum_{l=1}^{M}z_l^{k+2} :=\frac{1}{M}\sum_{l=1}^{M}z_l^{k+1}\notag\\&~~~~~~~~~~~~~~~~~ + \sum_{i=1}^{N} [
\frac{1}{N} (x_{i}^{k +1}-\frac{\lambda_{i}^{k+1}}{\rho}  ) - \frac{1}{N} (x_i^{k }-\frac{\lambda_{i}^{k }}{\rho} )  ],\label{eq12}
\end{align}   
\end{subequations}
   \begin{algorithm}[t]
	\caption{Parallel Walk ADMM (PW-ADMM)} 
	\begin{algorithmic}[1]
		\STATE \textbf{Initialize}: $\{x_i^0=\bm{0} , \lambda_i^0=\bm{0},k_i=0|i\in\mathcal{V} \}$ and $\{z_m^1=\bm{0},k_m'=0,i^{k_m'}|i^{k_m'}\in\mathcal{V},i^{k_m'}\neq i^{k_l'},\forall l\neq m,l,m\in\mathcal{M} \}$ 
		\STATE  \setuldepth{the}\ul{\emph{\textbf{Algorithm of the $m$-th Walk:}}}   
		\FOR{$k_m'=0,1,2,...$} 
		\STATE update $x_i^{k_i+1}$ according to (\ref{pw-eq1}) with $i=i^{k'_m}$;
		\STATE update $\lambda_i^{k_i+1}$ according to (\ref{pw-eq2}) with $i=i^{k'_m}$;
		\STATE update $z_m^{k_m'+2}$ according to (\ref{pw-eq3}) with $i=i^{k'_m}$;
		\STATE set $k_i\gets k_i+1$ with $i=i^{k'_m}$; 
		\STATE choose $i^{k_m'+1}(\in\overline{\mathcal{V}}_{i^{k_m'}})$ according to $P_{i^{k_m'},i^{k_m'+1}}$;
		\STATE send $z_m^{k_m'+2}$ to node $i^{k_m'+1}$.
		\ENDFOR
	\end{algorithmic} 
\end{algorithm}  
where $\widetilde{\mathcal{M}}_{k+1}\subseteq \mathcal{M}$ is the set of active random walks at iteration $k$. We adopt proximal update for $x_i$ with $q_i=\tau_iI$, where $\tau_i>0$ is a step size penalty chosen by agent $i$ and $\|u\|_G^2=u^TGu$ is \textit{G-norm}. Note $i_m\neq i_l(l,m\in\widetilde{\mathcal{M}}_{k+1},l\neq m)$ in (\ref{eq13}) and (\ref{eq14}). Hence to make (\ref{eq12}) satisfied, the update for token $z_m$ should follow
 \begin{equation}\label{zm}
 	z_m^{k+2} := \left\{\begin{aligned}
 	&z_m^{k+1}+\frac{M}{N} (x_{i_m}^{k +1}-\frac{\lambda_{i_m}^{k +1}}{\rho}  ) - \frac{M}{N} (x_{i_m}^{k }-\frac{\lambda_{i_m}^{k }}{\rho} ), ~m\in \widetilde{\mathcal{M}}_{k+1};\\
 	&z_m^{k+1},~\text{o.w.}.
 	\end{aligned}  \right.
 \end{equation} 
The update of $z_m^{k+2}$ in (\ref{zm}) can be carried out in parallel and asynchronously since it does not require information of $z_l^{k+1}(l,m\in\widetilde{\mathcal{M}}_{k+1},l\neq m)$. Thus
% with defining \textit{G-norm} as $\|u\|_G^2=u^TGu$, 
 we parallelize (\ref{eq13})-(\ref{eq12}) by the following updates for $(k_m'+1)$-th step of the $m$-th random walk, 
%Moreover, $z_l^{k+1}=\frac{1}{M}\sum_{l=1}^{M}z_l^{k+1}(\forall l\in\mathcal{M})$.
 \begin{subequations} 
 \begin{align}
% 	x_i^{k+1}:=&\left\{ \begin{aligned}
% 		&\arg \min_{x_i} f_i(x_i) +  \left\langle \lambda_i^k,z_j^{k+1}-x_i  \right\rangle +\frac{\rho}{2} \left\|z_j^{k+1}-x_i  \right\|^2+ \frac{1}{2}\left\|x_i-x_i^k \right\|_{p_i}^2, ~i=i_j^k, \\
% 		&x_i^k,~\text{otherwise}, 
% 		\end{aligned}\right.\\
x_i^{k_i+1}:=& \arg \min_{x_i} f_i(x_i) +  \langle  \lambda_{i}^{k_i},z_m^{k_m'+1}-x_i \rangle +  \frac{\rho}{2} \| z_m^{k'_m+1}-x_i \| ^2\notag\\&+  \frac{1}{2} \|x_i-x_i^{k_i} \|_{q_i}^2, ~i=i ^{k'_m}; \label{pw-eq1}\\ \lambda_i^{k_i+1}:=&\lambda_i^{k_i} + \rho (z_m^{k'_m+1}-x_{i}^{k_i+1} ), ~i=i ^{k'_m};\label{pw-eq2} \\ 	
z_m^{k'_m+2} :=& z_m^{k'_m+1} + \frac{M}{N} (x_{i}^{k_i+1}-\frac{\lambda_{i}^{k_i+1}}{\rho}  ) - \frac{M}{N} (x_i^{k_i }-\frac{\lambda_{i}^{k_i }}{\rho} ),~i=i^{k'_m},\label{pw-eq3}
\end{align}
\end{subequations} 
where $k_m'$ is the clock held by the $m$-th random walk. In (\ref{pw-eq1}),  For agents $i\neq i^{k'_m}$, the local variables $x_i$ and $\lambda_i$ are not updated by the $m$-th random walk. 
% the update of (\ref{pw-eq1})-(\ref{pw-eq3}) is not a direct parallel extension of (\ref{eq9})-(\ref{eq11}) since the process of (\ref{pw-eq1}) and (\ref{pw-eq2}) approximates $\frac{1}{M}\sum_{l=1}^{M}z_l^{k+1}$ with $z_m^{k+1}$. Moreover, 
The update of $z_m^{k'_m+2}$ only depends on $x_i^{k_i+1}$ and $\lambda_i^{k_i+1}$ from agent $i=i ^{k'_m}$ instead of (\ref{eq9}). 
%The update of $z_m^{k'_m+2}$ can be carried out in parallel with the $l$-th random walk as long as $i ^{k'_l}\neq i ^{k'_m}(l\neq m,\forall l\in\mathcal{M})$. 
Defining $\mathcal{V}_i(\subset \mathcal{V})$ the set of neighbors of agent $i(\in\mathcal{V})$ and $\overline{\mathcal{V}}_i = \mathcal{V}_i\bigcup i$, we present PW-ADMM in Algorithm 1. Similar to W-ADMM, the transition of token $z_m$ follows the embedded Markov chain with probability matrix $\bm{P}\in \mathbbm{R}^{N\times N}$. When $M=1$, the PW-ADMM reduces to W-ADMM.

 \subsection{Intelligent Parallel Random Walk ADMM}

  \begin{figure*}[ht] 
	\vskip 0.2in
	\begin{center}
		\centerline{\includegraphics[width=210mm]{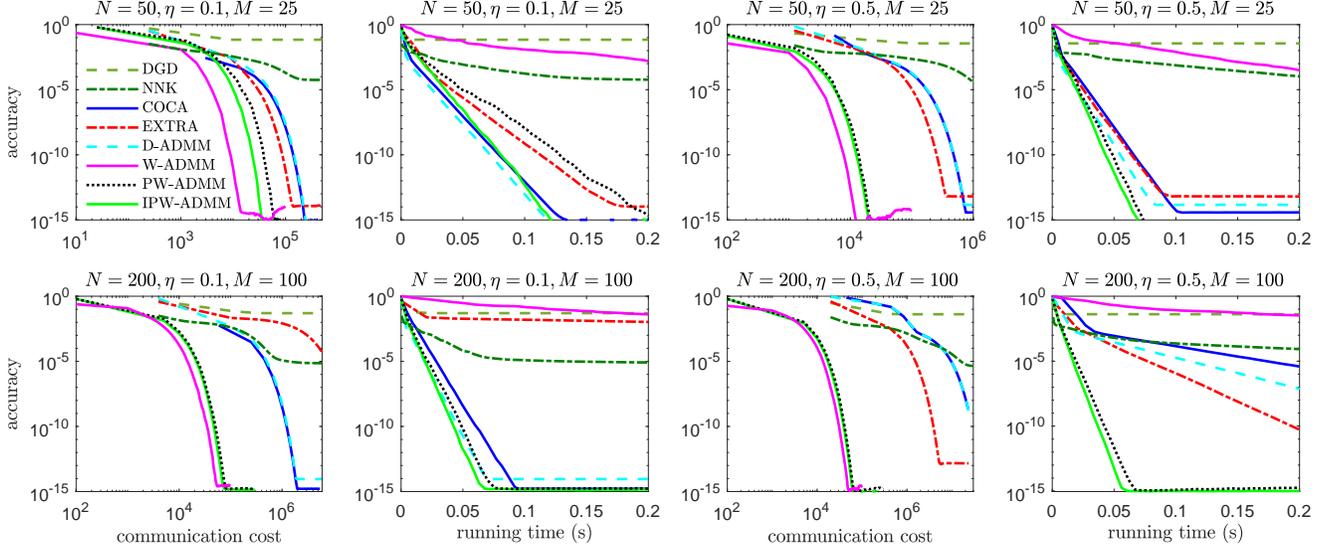}}
		\caption{The accuracy of \textit{decentralized least square} for algorithms: W-ADMM ($\beta=3$), PW-ADMM ($\rho=3,\tau=1.5$), IPW-ADMM ($\rho=3,\tau=1.5$), D-ADMM ($\rho=1$), EXTRA ($\alpha=0.05$), COCA ($c=1,\alpha=1,\rho=0.85$), DGD ($\alpha=0.01$), NNK ($K=2,\alpha=10^{-3}$).}
		\label{13}
	\end{center}
	\vskip -0.2in
\end{figure*} 

 For general problems, the convergence speeds of W-ADMM and PW-ADMM are mainly determined by how frequently all of the agents are visited. Since the transition of the token is determined by probability matrix $\bm{P}$, it is possible that the variables $x_i$ and $\lambda_i$ at some agents are updated for much fewer rounds than others. This hence may reduce the overall convergence speed. To guarantee agents not to be inactive for long time, we should improve the transition strategy for tokens. 
 Inspired by \cite{avin2006}, which introduces the \textit{Random Walk with Choice}, we present IPW-ADMM in Algorithm 2. Different from PW-ADMM, IPW-ADMM requires that agent $i$ has the knowledge of the active rounds of agents in $\overline{\mathcal{V}}_i$. Considering the $k_m'$-th step of the $m$-th random walk, the updated $z_m$ will be sent to the least visited agent $i^{k_m'+1}:=\arg\min_{i\in\mathcal{V}_{i^{k'_m}}} k_i$. Note that we do not count the communication cost of sharing $\{k_i\}$ across the agents since the cost is negligible compared with transmitting tokens. 
 
% \begin{algorithm}[t]
% 	\caption{Intelligent Parallel Random Walk ADMM (IPW-ADMM)} 
% 	\begin{algorithmic}[1]
% 		\STATE \textbf{Initialize}: $\{x_i^0=\bm{0},\lambda_i^0=\bm{0},k_i=0,\delta_i=0 |i\in\mathcal{V} \}$ and $\{z_m^1=\bm{0},k_m'=0,i^{k_m'}|i^{k_m'}\in\mathcal{V},i^{k_m'}\neq i^{k_l'},\forall l\neq m,l,m\in\mathcal{M} \}$  %\begin{lemma}
% 		\STATE  \setuldepth{the}\ul{\emph{\textbf{Algorithm of the $m$-th Walk:}}}  %	cover time of intelligent parallel random walk.
% 		\FOR{$k_m'=0,1,2,...$}  %\end{lemma}
% 		\STATE update $x_i^{k_i+1}$ according to (\ref{pw-eq1}) with $i=i^{k'_m}$; %
% 		\STATE update $\lambda_i^{k_i+1}$ according to (\ref{pw-eq2}) with $i=i^{k'_m}$; %\begin{lemma}
% 		\STATE update $z_m^{k_m'+2}$ according to (\ref{pw-eq3}) with $i=i^{k'_m}$; %	$\mathcal{L}$ satisfies Lemma ? and ?.
% 		\STATE set $k_i\gets k_i+1$ with $i=i^{k'_m}$;  %\end{lemma}
% 		\STATE choose $i^{k_m'+1}:=\arg\min_{i\in\mathcal{V}_{i^{k'_m}}}k_i $; %
% 		\STATE send $z_m^{k_m'+2}$ to node $i^{k_m'+1}$. %\begin{theorem}
% 		\ENDFOR %	convergence of IPW-ADMM.
% 	\end{algorithmic}  %\end{theorem}
% \end{algorithm} 
 
  \begin{algorithm}[t]
 	\caption{Intelligent PW-ADMM (IPW-ADMM)} 
 	\begin{algorithmic}[1]
 		\STATE \textbf{Initialize}: same as that of PW-ADMM.
 		\STATE  \setuldepth{the}\ul{\emph{\textbf{Algorithm of the $m$-th Walk:}}}  %	cover time of intelligent parallel random walk.
 		\STATE follow steps 3-9 of PW-ADMM but substitute step 9 with the following step: 
 		\STATE choose $i^{k_m'+1}:=\arg\min_{i\in\mathcal{V}_{i^{k'_m}}}k_i $. % 		
 	\end{algorithmic}  %\end{theorem}
 \end{algorithm} 
 
Since all agents and parallel random walks keep individual clock, both PW-ADMM and IPW-ADMM are asynchronous algorithms. However, our proposed algorithms are different from existing work \cite{zhang2014,TChang}, where only one master updates the variable $z$. Moreover, the updated $z$ is only sent to the agents just active.

\subsection{Convergence Analysis}
 
% \begin{assumption}
% 	The objective function $f_i(x)$ is bounded from below over $x$, and $f_i(x)$ is coercive over $x$.
% \end{assumption}
We present some results on the convergence of PW-ADMM and IPW-ADMM with the following assumption.
 \begin{assumption}
 	The objective function $f_i(x)$ is L-Lipschitz differentiable, that is
 	\begin{equation}
 		\left\| \nabla f_i(x_1)- \nabla f_i(x_2)  \right\|\leq L\left\|x_1-x_2 \right\|,~\forall x_1,x_2\in \mathbbm{R}^n.
 	\end{equation} 
 \end{assumption}
 Though (I)PW-ADMM is asynchronous algorithm, we prove the convergence from the synchronous point of view. Without loss of generality, we denote each synchronous iteration as the update for only one token, where $|\widetilde{\mathcal{M}}_k|=1$.
 \begin{theorem}
 	Under Assumption 1 and $\tau_i=0$, the sequence $\left(\bm{x}^{k },\bm{\lambda} ^{k },\bm{z}^{k+1}\right)$ generated by (I)PW-ADMM satisfies 
 	\begin{align} 
 		&\mathcal{L}_{\rho} ( \bm{x}^{k },\bm{\lambda} ^{k },\bm{z}^{k+1}  ) - \mathcal{L}_{\rho} (\bm{x}^{k+1},\bm{\lambda} ^{k+1 },\bm{z}^{k+2}  ) \geq  (\frac{\rho}{2}-\frac{3L}{2}-\frac{L^2}{\rho} )\cdot\notag\\& \|x_{i_m}^k-x_{i_m}^{k+1} \|^2-\frac{N\rho}{2} \|\frac{1}{M}\sum_{l=1}^{M}z_l^{k+1}-z_m^{k+1}  \|^2,~\widetilde{\mathcal{M}}_{k+1}=\{m\}.\label{theorem1}
 	\end{align}
 \end{theorem}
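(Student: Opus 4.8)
The plan is to run the familiar ADMM sufficient-decrease argument, adapted to the single-token incremental update and to the fact that PW-ADMM substitutes the single token $z_m$ for the average $\bar z:=\tfrac1M\sum_{l=1}^{M}z_l$ inside the primal and dual updates. \emph{Setup and decomposition.} Working in the synchronous picture, at step $k{+}1$ only walk $m$, held by node $i_m$, moves, so $x_i^{k+1}=x_i^k$ and $\lambda_i^{k+1}=\lambda_i^k$ for every $i\ne i_m$; I abbreviate $\Delta x:=x_{i_m}^{k+1}-x_{i_m}^{k}$, $\Delta\lambda:=\lambda_{i_m}^{k+1}-\lambda_{i_m}^{k}$, $\delta:=\bar z^{k+1}-z_m^{k+1}$, and I use that $\mathcal L_\rho$ depends on $\bm z$ only through $\bar z$. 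The first step is to split
$$\mathcal L_\rho(\bm x^{k},\bm\lambda^{k},\bm z^{k+1})-\mathcal L_\rho(\bm x^{k+1},\bm\lambda^{k+1},\bm z^{k+2})=T_x+T_\lambda+T_z,$$
where $T_x,T_\lambda,T_z$ are the changes caused by updating $\bm x$, then $\bm\lambda$, then $\bm z$ one block at a time, and to bound each.

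\emph{Preliminary identities.} Before touching the blocks I would record three facts. From the first-order condition of (\ref{pw-eq1}) with $\tau_{i_m}=0$ together with (\ref{pw-eq2}), $\nabla f_{i_m}(x_{i_m}^{k+1})=\lambda_{i_m}^{k+1}$; since $\lambda_{i_m}$ is frozen between consecutive activations of $i_m$, also $\lambda_{i_m}^{k}=\nabla f_{i_m}(x_{i_m}^{k})$, and hence $\|\Delta\lambda\|\le L\|\Delta x\|$ by Assumption 1. From (\ref{pw-eq3}), $N\rho\,(\bar z^{k+2}-\bar z^{k+1})=\rho\,\Delta x-\Delta\lambda$. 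Finally, an induction off the zero initialization in Algorithm 1 gives $\bar z^{k+1}=\tfrac1N\sum_{i}(x_i^{k}-\lambda_i^{k}/\rho)$ for all $k$, so $\bar z^{k+2}$ is exactly the minimizer of the quadratic $\bar z\mapsto\mathcal L_\rho(\bm x^{k+1},\bm\lambda^{k+1},\bm z)$, whose Hessian in $\bar z$ is $N\rho I$.

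\emph{The three blocks and the cancellation.} For $T_z$: because $\bar z^{k+2}$ is that exact minimizer, $T_z=\tfrac{N\rho}{2}\|\bar z^{k+1}-\bar z^{k+2}\|^2$. For $T_\lambda$: only $\lambda_{i_m}$ moves and the multiplier term of $\mathcal L_\rho$ is linear, so $T_\lambda=\langle-\Delta\lambda,\bar z^{k+1}-x_{i_m}^{k+1}\rangle$; writing $\bar z^{k+1}-x_{i_m}^{k+1}=\delta+\tfrac1\rho\Delta\lambda$ (the last summand from (\ref{pw-eq2})) and using $\|\Delta\lambda\|\le L\|\Delta x\|$ yields $T_\lambda\ge-\langle\Delta\lambda,\delta\rangle-\tfrac{L^2}{\rho}\|\Delta x\|^2$. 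For $T_x$: only $x_{i_m}$ moves; expanding the quadratic part of $\mathcal L_\rho$ exactly and bounding $f_{i_m}(x_{i_m}^{k})-f_{i_m}(x_{i_m}^{k+1})\ge-\langle\lambda_{i_m}^{k},\Delta x\rangle-\tfrac L2\|\Delta x\|^2$ (L-smoothness with $\nabla f_{i_m}(x_{i_m}^{k})=\lambda_{i_m}^{k}$) makes the $\lambda_{i_m}^{k}$ terms cancel, leaving $T_x\ge\tfrac{\rho-L}{2}\|\Delta x\|^2+\rho\langle\Delta x,\bar z^{k+1}-x_{i_m}^{k+1}\rangle=\tfrac{\rho-L}{2}\|\Delta x\|^2+\rho\langle\Delta x,\delta\rangle+\langle\Delta x,\Delta\lambda\rangle$, and then $\langle\Delta x,\Delta\lambda\rangle\ge-L\|\Delta x\|^2$ gives $T_x\ge\tfrac{\rho-3L}{2}\|\Delta x\|^2+\rho\langle\Delta x,\delta\rangle$. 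Summing the three bounds, the cross terms combine into $\langle\rho\Delta x-\Delta\lambda,\delta\rangle=N\rho\langle\bar z^{k+2}-\bar z^{k+1},\delta\rangle$, which together with $T_z$ completes a square: $N\rho\langle\bar z^{k+2}-\bar z^{k+1},\delta\rangle+\tfrac{N\rho}{2}\|\bar z^{k+1}-\bar z^{k+2}\|^2=\tfrac{N\rho}{2}\|(\bar z^{k+2}-\bar z^{k+1})+\delta\|^2-\tfrac{N\rho}{2}\|\delta\|^2\ge-\tfrac{N\rho}{2}\|\delta\|^2$. Recalling $\|\Delta x\|=\|x_{i_m}^{k}-x_{i_m}^{k+1}\|$ and $\delta=\tfrac1M\sum_l z_l^{k+1}-z_m^{k+1}$, this is exactly (\ref{theorem1}).

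\emph{Where the difficulty lies.} The delicate point is controlling the $\delta=\bar z^{k+1}-z_m^{k+1}$ cross terms that appear precisely because PW-ADMM does not carry out the exact $\bm z$-block minimization of $\mathcal L_\rho$ — it uses one token rather than the token average in the $\bm x$- and $\bm\lambda$-steps. These leak out of every block move, and the only reason they do not degrade the $\|\Delta x\|^2$ coefficient is that the leftover primal/dual increments reassemble into $N\rho(\bar z^{k+2}-\bar z^{k+1})$, letting $T_z$ absorb the square; this in turn hinges on pinning $\bar z^{k+2}$ to the synchronous minimizer $\tfrac1N\sum_i(x_i^{k+1}-\lambda_i^{k+1}/\rho)$, an identity that rests on the zero initialization and is easy to overlook. (A harmless caveat: $\lambda_{i_m}^{k}=\nabla f_{i_m}(x_{i_m}^{k})$ requires $i_m$ to have been activated before, or $\nabla f_{i_m}$ to vanish at the initial iterate, at its very first activation; this does not affect the asymptotic statement.)
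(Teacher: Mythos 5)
Your proof is correct and follows essentially the same route as the paper, which simply invokes the W-ADMM descent argument (Lemma 1 and Theorem 1 of that reference) with the single token replaced by the token average $\tfrac1M\sum_l z_l$; your block decomposition $T_x+T_\lambda+T_z$, the identities $\nabla f_{i_m}(x_{i_m}^{k+1})=\lambda_{i_m}^{k+1}$ and $N\rho(\bar z^{k+2}-\bar z^{k+1})=\rho\Delta x-\Delta\lambda$, and the completion of the square that isolates $-\tfrac{N\rho}{2}\|\delta\|^2$ are exactly the substitution the authors have in mind, just written out in full. The first-activation caveat you flag ($\lambda_{i_m}^k=\nabla f_{i_m}(x_{i_m}^k)$ only after node $i_m$ has been visited once) is inherited from W-ADMM itself and does not affect the claim.
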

\begin{proof}
	 By substituting $z^{k+1}$ with $\frac{1}{M}\sum_{l=1}^{M}z_l^{k+1}$ in the proof of Lemma 1 and Theorem 1 \cite{WADMM}, the result (\ref{theorem1}) can be obtained.

%	 \begin{align}
%	 &\mathcal{L}_{\rho} ( \bm{x}^{k},\bm{\lambda} ^{k},\bm{z}^{k+1}  ) - \mathcal{L}_{\rho} (\bm{x}^{k+1},\bm{\lambda} ^{k+1 },\bm{z}^{k+1}  )\geq(\frac{\rho}{2}-\frac{3L}{2}-\frac{L^2}{\rho} )\cdot\notag\\&  \|x_{i_m}^k-x_{i_m}^{k+1} \|^2-\frac{\rho N}{M} \langle \frac{1}{M}\sum_{l=1}^{M}z_l^{k+1}-z_m^{k+1} ,z_m^{k+2}-z_m^{k+1} \rangle,\\
%	 &\mathcal{L}_{\rho} ( \bm{x}^{k+1},\bm{\lambda} ^{k+1},\bm{z}^{k+1}  ) - \mathcal{L}_{\rho} (\bm{x}^{k+1},\bm{\lambda} ^{k+1 },\bm{z}^{k+2}  )\geq\frac{\rho N}{2M^2}\|z_m^{k+2}-z_m^{k+1} \|^2.
%	 \end{align}	 	

%	 the result (\ref{theorem1}) can be obtained
\end{proof}

  \begin{figure} [t] 
 	\vskip 0.2in
 	\begin{center}
 		\centerline{\includegraphics[width=95 mm]{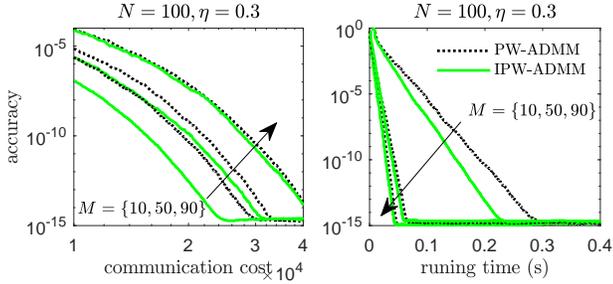}}
 		\caption{The impact of $M$ for algorithms: PW-ADMM ($\rho=1,\tau=3$), IPW-ADMM ($\rho=1,\tau=3$).}
 		\label{33}
 	\end{center}
 	\vskip -0.2in
 \end{figure}

\section{Numerical Results}
In this section, we provide numerical results from simulations to demonstrate the communication efficiency and running speed of PW-ADMM and IPW-ADMM compared with state-of-the-art methods in \cite{COCA,DGD,EXTRA,WADMM,DADMM,7590162} with respect to the accuracy, which is defined as
\begin{equation}\label{acc}
\text{accuracy}=\frac{1}{N}\sum_{i=1}^{N}\frac{\|x_i^k-x^*\|}{\|x_i^0-x^*\|},
\end{equation} 
where $x^*\in\mathbb{R}^n$ is optimal solution of (\ref{eq2}). For fair comparison, the parameters for algorithms are tuned to be the best, and kept the same in different experiments. The connected network $\mathcal{G}$ is generated randomly with $N$ agents and $|\mathcal{E}|=\frac{N(N-1)}{2}\eta$ links. Besides, the dimension of $x_i$ is set to be $n=2$. 
We consider unicast among agents, and the resultant communication cost for each transmission of a $n$-dimensional vector is $1$ unit. 
%We consider unicast among agents, and there are $M=\frac{N}{2}$ active random walks. We count one communication for each transmission of a $n$-length vector ($n=2$ is the dimension of $x_i,\lambda_i$ and $z_l$). 
The running time includes both computing time and communication time. Without loss of generality, we assume that each agent has multi-process capability to update the tokens for multiple random walks in PW-ADMM and IPW-ADMM. Moreover, the consumed time for each communication is assumed to follow $\mathcal{U}(10^{-5},10^{-4})$ (s). The simulation is carried out on a laptop with Intel I7 processor and 8GB memory. The programing environment is Matlab R2016a.

\begin{figure*} [t] 
	\vskip 0.2in
	\begin{center}
		\centerline{\includegraphics[width=210mm]{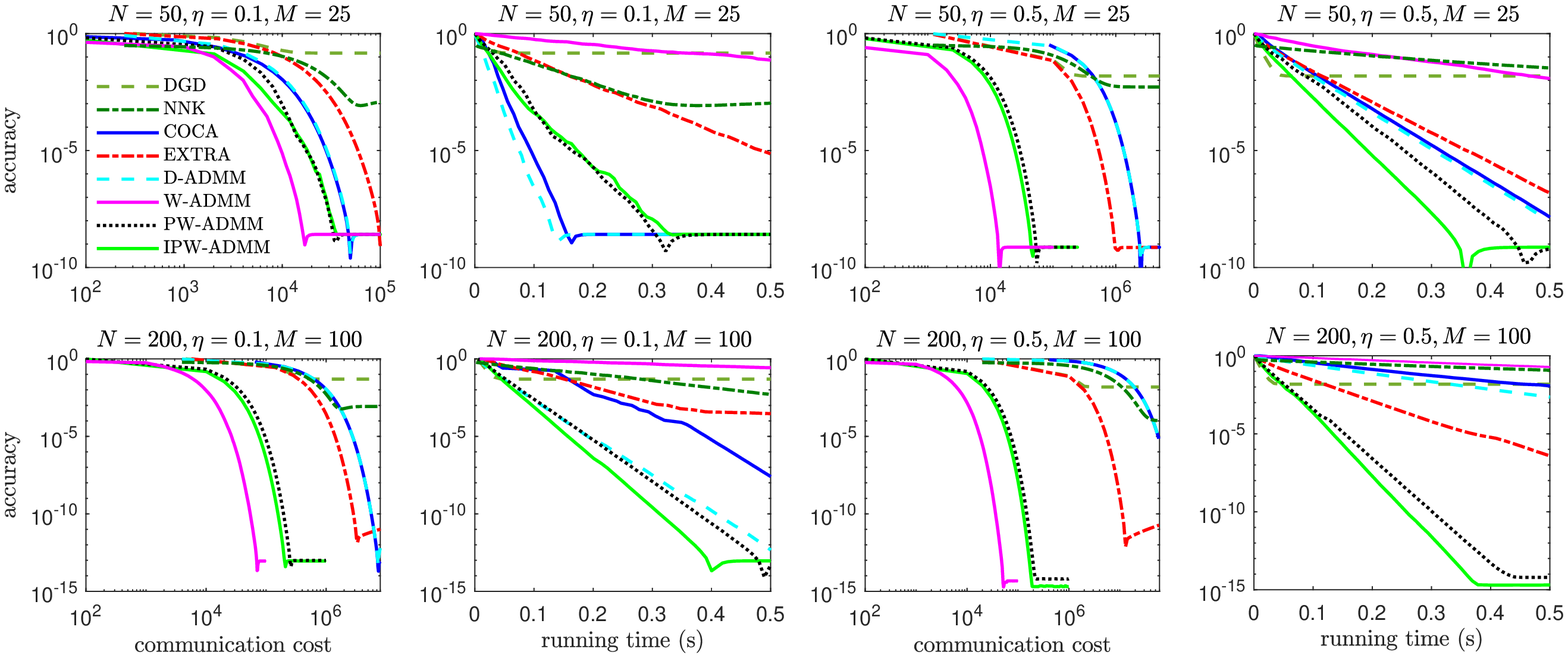}}
		\caption{The accuracy of \textit{decentralized logistic regression} for algorithms: W-ADMM ($\beta=1$), PW-ADMM ($\rho=1,\tau=3$), IPW-ADMM ($\rho=1,\tau=3$), D-ADMM ($\rho=5$), EXTRA ($c=0.01$), COCA ($c=5,\alpha=1,\rho=0.85$), DGD ($\alpha=0.01$), NNK ($K=2,\alpha=10^{-3}$).}
		\label{23}
	\end{center}
	\vskip -0.2in
\end{figure*} 

\subsection{Decentralized least square problem}
  
The decentralized least square problem such as \cite{8523816} aims at solving problem (\ref{eq1}) with the local cost function
\begin{equation}
	f_i(x_i) = \frac{1}{b_i}\sum_{j=1}^{b_i} \| x_i^To_{i,j}-t_{i,j} \|^2,
\end{equation}
where $\mathcal{D}_i=\{o_{i,j},t_{i,j}|j=1,...,b_i \}$ is the dataset of agent $i$ locally. The entries of input $o_{i,j}\in\mathbbm{R}^{2}$ and target $t_{i,j}\in\mathbbm{R}$ follow i.i.d. distribution $\mathcal{U}(0,1)$. The number of data samples is kept unique across agents with $b_i=30$.

The accuracy in (\ref{acc}) over communication cost and running time for different network settings is shown in Fig. 2. It is clear that W-ADMM is the most efficient in communication cost but with slow running speed. The proposed parallel random walk algorithms PW-ADMM and IPW-ADMM can significantly reduce the running time from W-ADMM, and consume much less communication resources compared to DGD, D-ADMM, EXTRA and COCA. In addition, IPW-ADMM can further reduce running time from PW-ADMM. Especially when the network is large and highly-connected, i.e., $N=200$ and $\eta=0.5$, the PW-ADMM and IPW-ADMM can achieve the best performance in running time with almost the same communication cost with W-ADMM. This is because the inherent asynchronous mechanism of PW-ADMM and IPW-ADMM outperforms the synchronous methods.
%compared with other algorithms, only informative message is transferred in PW-ADMM and IPW-ADMM, and the inherent asynchronous mechanism dominates the synchronous methods. 
 
 In Fig. 3 we present the impact of $M$, the number of active random walks, on the convergence behavior. It can be concluded that with increasing $M$, a larger communication cost of both PW-ADMM and IPW-ADMM is required to achieve the same accuracy, while the running time will be shorten. Hence there exists a trade-off between the communication cost and the running time over $M$. Besides, for a larger $M$, e.g. $M=90$, the accuracy gap between PW-ADMM and IPW-ADMM shrinks compared with the case where $M=10$. This shows the advantage of intelligently choosing updating path for each walk according to the updated frequency of agents over randomly processing is weakened when more random walks are active.

\subsection{Decentralized logistic regression problem}
In the decentralized logistic regression, the local loss function of agent $i$ is
\begin{equation}
	f_i(x_i) = \frac{1}{b_i}\sum_{j=1}^{b_i}  \log( 1+\exp(-t_{i,j}x_i^To_{i,j})) ,
\end{equation}
 where $t_{i,j}\in\{-1,1\}$ and $b_i=30$. Each sample feature $o_{i,j}$ follows $\mathcal{N}(0,I)$. To generate $t_{i,j}$, we first generate a random vector $x_0\in\mathbbm{R}^2\sim\mathcal{N}(0,I)$. Then for each sample, we generate $v_{i,j}$ according to $\mathcal{U}(0,1)$, and if $v_{i,j}\leq (1+\exp(-x_0^To_{i,j}))^{-1}$, we set $t_{i,j}$ as $1$, otherwise $-1$. Since it is difficult to solve the optimization problem, e.g. (\ref{pw-eq1}), in PW-ADMM, we alternatively use the first-order approximation as
 \begin{equation}
 	f_i(x_i) \approx f_i(x_i^k)+ \nabla f_i(x_i^k) (x_i-x_i^k ).
 \end{equation} 
Fairly, we adopt the first-order approximation for algorithms IPW-ADMM, W-ADMM, D-ADMM and COCA. 
Fig. 4 presents the accuracy over communication cost and running time. Apparently compared to other benchmarks in \cite{COCA,DADMM,DGD,EXTRA,WADMM}, only the proposed parallel random walk algorithms PW-ADMM and IPW-ADMM can guarantee both the communication-efficiency and fast convergence speed for different network settings. The curves with different network setups present the similar trends as those of Fig. 2. 
%can be explained by the same reason from that of Fig. 2. 
 
\section{Conclusions}

We study the communication efficiency and running time for ADMM based consensus problem in decentralized networks. By extending W-ADMM, two parallel random walk algorithms, PW-ADMM and IPW-ADMM are proposed. With the asynchronous characteristic, compared with other approaches, the provided algorithms can achieve much faster running speed with less communication costs, especially for the dense networks. Moreover, simulations demonstrate the scalability of PW-ADMM and IPW-ADMM in terms of the network size.

%is scalable with the size of decentralized networks.
%multi-task learning problem based on an ELM implementation. We first introduce the centralized MTL problem based on ELM, and present the AO-based algorithm MTL-ELM to solve the problem. Then we extend to the decentralized scenario by formulating the DMTL problem as a majorized multi-block optimization with coupled bi-convex objective functions.  
% 
 
% \clearpage

\bibliography{ref}

% Generated by IEEEtran.bst, version: 1.14 (2015/08/26)
\begin{thebibliography}{10}
\providecommand{\url}[1]{#1}
\csname url@samestyle\endcsname
\providecommand{\newblock}{\relax}
\providecommand{\bibinfo}[2]{#2}
\providecommand{\BIBentrySTDinterwordspacing}{\spaceskip=0pt\relax}
\providecommand{\BIBentryALTinterwordstretchfactor}{4}
\providecommand{\BIBentryALTinterwordspacing}{\spaceskip=\fontdimen2\font plus
\BIBentryALTinterwordstretchfactor\fontdimen3\font minus
  \fontdimen4\font\relax}
\providecommand{\BIBforeignlanguage}[2]{{%
\expandafter\ifx\csname l@#1\endcsname\relax
\typeout{** WARNING: IEEEtran.bst: No hyphenation pattern has been}%
\typeout{** loaded for the language `#1'. Using the pattern for}%
\typeout{** the default language instead.}%
\else
\language=\csname l@#1\endcsname
\fi
#2}}
\providecommand{\BIBdecl}{\relax}
\BIBdecl

\bibitem{ID}
I.~D. {Schizas}, A.~{Ribeiro}, and G.~B. {Giannakis}, ``Consensus in ad-hocwsns
  with noisy links-part i: Distributed estimation of deterministic signals,''
  \emph{IEEE Transactions on Signal Processing}, vol.~56, no.~1, pp. 350--364,
  Jan 2008.

\bibitem{FZeng}
F.~{Zeng}, C.~{Li}, and Z.~{Tian}, ``Distributed compressive spectrum sensing
  in cooperative multihop cognitive networks,'' \emph{IEEE Journal of Selected
  Topics in Signal Processing}, vol.~5, no.~1, pp. 37--48, Feb 2011.

\bibitem{HJ}
H.~J. {Liu}, W.~{Shi}, and H.~{Zhu}, ``Distributed voltage control in
  distribution networks: Online and robust implementations,'' \emph{IEEE
  Transactions on Smart Grid}, vol.~9, no.~6, pp. 6106--6117, Nov 2018.

\bibitem{5704850}
J.~{Choi}, ``Distributed beamforming using a consensus algorithm for
  cooperative relay networks,'' \emph{IEEE Communications Letters}, vol.~15,
  no.~4, pp. 368--370, April 2011.

\bibitem{6365874}
H.~{Paul}, J.~{Fliege}, and A.~{Dekorsy}, ``In-network-processing: Distributed
  consensus-based linear estimation,'' \emph{IEEE Communications Letters},
  vol.~17, no.~1, pp. 59--62, January 2013.

\bibitem{DGD}
K.~Yuan, Q.~Ling, and W.~Yin, ``On the convergence of decentralized gradient
  descent,'' \emph{SIAM Journal on Optimization}, vol.~26, no.~3, pp.
  1835--1854, 2016.

\bibitem{EXTRA}
W.~Shi, Q.~Ling, G.~Wu, and W.~Yin, ``Extra: An exact first-order algorithm for
  decentralized consensus optimization,'' \emph{SIAM Journal on Optimization},
  vol.~25, no.~2, pp. 944--966, 2015.

\bibitem{8466658}
E.~{Vlachos}, G.~C. {Alexandropoulos}, and J.~{Thompson}, ``Massive mimo
  channel estimation for millimeter wave systems via matrix completion,''
  \emph{IEEE Signal Processing Letters}, vol.~25, no.~11, pp. 1675--1679, Nov
  2018.

\bibitem{DADMM}
W.~{Shi}, Q.~{Ling}, K.~{Yuan}, G.~{Wu}, and W.~{Yin}, ``On the linear
  convergence of the admm in decentralized consensus optimization,'' \emph{IEEE
  Transactions on Signal Processing}, vol.~62, no.~7, pp. 1750--1761, April
  2014.

\bibitem{Deng2017}
W.~Deng, M.-J. Lai, Z.~Peng, and W.~Yin, ``Parallel multi-block admm with $o(1
  / k)$ convergence,'' \emph{Journal of Scientific Computing}, vol.~71, no.~2,
  pp. 712--736, May 2017.

\bibitem{boyd2011distributed}
S.~Boyd, N.~Parikh, E.~Chu, B.~Peleato, J.~Eckstein \emph{et~al.},
  ``Distributed optimization and statistical learning via the alternating
  direction method of multipliers,'' \emph{Foundations and Trends in Machine
  learning}, vol.~3, no.~1, pp. 1--122, 2011.

\bibitem{JFC}
J.~F.~C. {Mota}, J.~M.~F. {Xavier}, P.~M.~Q. {Aguiar}, and M.~{Puschel},
  ``D-admm: A communication-efficient distributed algorithm for separable
  optimization,'' \emph{IEEE Transactions on Signal Processing}, vol.~61,
  no.~10, pp. 2718--2723, May 2013.

\bibitem{COCA}
Y.~Liu, W.~Xu, G.~Wu, Z.~Tian, and Q.~Ling, ``Communication-censored admm for
  decentralized consensus optimization,'' \emph{IEEE Transactions on Signal
  Processing}, vol.~67, no.~10, pp. 2565--2579, 2019.

\bibitem{WADMM}
W.~Yin, X.~Mao, K.~Yuan, Y.~Gu, and A.~H. Sayed, ``A communication-efficient
  random-walk algorithm for decentralized optimization,'' \emph{arXiv preprint
  arXiv:1804.06568}, 2018.

\bibitem{zhang2014}
R.~Zhang and J.~Kwok, ``Asynchronous distributed admm for consensus
  optimization,'' in \emph{International Conference on Machine Learning}, 2014,
  pp. 1701--1709.

\bibitem{avin2006}
C.~Avin and B.~Krishnamachari, ``The power of choice in random walks: An
  empirical study,'' in \emph{Proceedings of the 9th ACM international
  symposium on Modeling analysis and simulation of wireless and mobile
  systems}.\hskip 1em plus 0.5em minus 0.4em\relax ACM, 2006, pp. 219--228.

\bibitem{TChang}
T.~{Chang}, M.~{Hong}, W.~{Liao}, and X.~{Wang}, ``Asynchronous distributed
  admm for large-scale optimization-part i: Algorithm and convergence
  analysis,'' \emph{IEEE Transactions on Signal Processing}, vol.~64, no.~12,
  pp. 3118--3130, June 2016.

\bibitem{7590162}
A.~{Mokhtari}, Q.~{Ling}, and A.~{Ribeiro}, ``Network newton distributed
  optimization methods,'' \emph{IEEE Transactions on Signal Processing},
  vol.~65, no.~1, pp. 146--161, Jan 2017.

\bibitem{8523816}
C.~{Huang}, L.~{Liu}, C.~{Yuen}, and S.~{Sun}, ``Iterative channel estimation
  using lse and sparse message passing for mmwave mimo systems,'' \emph{IEEE
  Transactions on Signal Processing}, vol.~67, no.~1, pp. 245--259, Jan 2019.

\end{thebibliography}
\bibliographystyle{IEEEtran}

\end{document}